\newcommand{\scrE}{\mathscr{E}}
\newcommand{\scrF}{\mathscr{F}}
\newcommand{\bfb}{\mbox{\boldmath $b$}}
\newcommand{\bfd}{\mbox{\boldmath $d$}}
\newcommand{\bfzero}{\mbox{\boldmath $0$}}
\newcommand{\bfone}{\mbox{\boldmath $1$}}
\newcommand{\bfonee}{\mbox{\boldmath $1$}_e}
\newcommand{\bff}{\mbox{\boldmath $f$}}
\newcommand{\bfS}{\mathbf{S}}
\newcommand{\bfG}{\mathbf{G}}
\newtheorem{theorem}{\indent Theorem}[section]
\newtheorem{corollary}{\indent Corollary}[section]
\newtheorem{lemma}{\indent Lemma}[section]
\newtheorem{example}{\indent Example}[section]
\newtheorem{remark}{\indent Remark}[section]
\def\rank{\mathrm{rank}}
\renewcommand{\QED}{\QEDopen}
\title{Stability of Iterative Decoding of Multi-Edge Type Doubly-Generalized LDPC Codes Over the
BEC}
\author{Enrico Paolini, Mark F. Flanagan, Marco Chiani and Marc P. C. Fossorier
\thanks{E. Paolini and M. Chiani are with DEIS/WiLAB, University of Bologna, via Venezia 52, 47521
Cesena (FC), Italy (e-mail: e.paolini@unibo.it, marco.chiani@unibo.it). %}%
%\thanks{
M.F. Flanagan is with the School of Electrical, Electronic and Mechanical Engineering,
University College Dublin (UCD), Belfield, Dublin 4, Ireland (e-mail: mark.flanagan@ieee.org). %}%
%\thanks{
M.P.C. Fossorier is with ETIS ENSEA, UCP, CNRS UMR-8051, 6 avenue du Ponceau, 95014 Cergy
Pontoise, France (e-mail:  mfossorier@ieee.org).}
\thanks{Supported in part by the EC under Project FP7 OPTIMIX (ICT-214625). %Seventh FP grant
%agreement ICT %OPTIMIX n.INFSO-ICT-214625.
}
}
\begin{document}
\maketitle

\begin{abstract}
Using the EXIT chart approach, a necessary and sufficient condition is developed for the local
stability of iterative decoding of multi-edge type (MET) doubly-generalized low-density parity-check
(D-GLDPC) code ensembles. In such code ensembles, the use of arbitrary linear block codes as
component codes is combined with the further design of local Tanner graph connectivity through the
use of multiple edge types. The stability condition for these code ensembles is shown to be
succinctly described in terms of the value of the spectral radius of an appropriately defined
polynomial matrix.
\end{abstract}

\section{Introduction}\label{sec:introduction}
Multi-edge type (MET) low-density parity-check (LDPC) codes were originally proposed in
\cite{richardson2004:MET} as a framework to capture both degree-$1$ variable nodes (VNs) and
punctured bits in the analysis of LDPC code ensembles, and to achieve a finer control of the
connectivity between VNs and check nodes (CNs) in the ensemble definition. The new framework
allowed the design of powerful finite-length LDPC codes over the additive white Gaussian noise
(AWGN) channel, with a very good compromise between waterfall performance and error-floor. Since
then, several aspects of MET LDPC codes have been investigated such as, for instance, their average
weight distribution \cite{kasai10:weight_MET}. Traditional unstructured irregular LDPC code
ensembles parametrized through their degree distribution pair \cite{luby01:improved} may be seen as
MET ensembles where all edges in the Tanner graph are of the same type. On the other hand, LDPC
codes based on protographs \cite{thorpe2003:protograph} may be seen as MET LDPC codes such that no
two edges connected to the same VN or to the same CN are of the same type.

Another way to extend the original framework of unstructured LDPC code ensembles consists of
replacing the VNs and the CNs with linear block codes other than repetition codes and single
parity-check (SPC) codes, respectively. The resulting LDPC-like codes are called doubly-generalized
LDPC (D-GLDPC) codes \cite{wang06:D-GLDPC}, and extend the original idea of generalized LDPC (GLDPC)
codes \cite{tanner81:recursive}, where only the CNs were replaced with generic linear block codes.
Several theoretical aspects of unstructured D-GLDPC codes have been recently clarified, such as
their stability condition over the binary erasure channel (BEC) \cite{paolini09:stability}, and the
analysis of the exponent of their weight distribution \cite{flanagan11:growth_rate}.

The two different extensions can be considered together, leading to the concept of MET D-GLDPC code
ensemble. This represents a very general framework for design and analysis of LDPC-like codes,
enabling to handle different variable and check component codes along with puncturing; VNs and CNs
with local minimum distance $1$, including degree-$1$ VNs (state variables), can be also
considered. The asymptotic weight enumerators for MET D-GLDPC codes were investigated in
\cite{fossorier09:MET}, while EXIT analysis to calculate the threshold of MET D-GLDPC codes over
the BEC was developed in \cite{paolini10:multi}.

In this paper, we analyze the convergence properties of the belief-propagation (BP) decoder for MET
D-GLDPC code ensembles over the BEC by developing its \emph{stability condition}, i.e., the
condition under which the erasure-free state attracts the decoder, in the asymptotic setting where
the codeword length tends to infinity. If and only if the condition is satisfied, BP decoding can
in principle succeed, provided the BEC erasure probability is below the threshold which can be
calculated using the technique in \cite{paolini10:multi}. The stability condition is obtained in
the case where the are no punctured bits and where the local minimum distance of each VN and CN is
at least $2$. It can be shown that the obtained condition coincides with that developed in
\cite{richardson2004:MET},\cite[Ch.~7]{urbanke:modern} in the special case of MET LDPC~codes.

\section{Preliminary Definitions}\label{sec:preliminary_def}
\subsection{Concept of D-GLDPC Codes}\label{subsec:DGLDPC_def}
A D-GLDPC code consists of a set of CNs and a set of VNs. Each CN
corresponds to some arbitrary linear `local' code. On the other hand, each VN corresponds to some
arbitrary linear `local' code, together with its \emph{encoder} (i.e., generator matrix).
Graphically, each CN and each VN may be viewed as having a set of \emph{sockets} corresponding to
the bits in the local codeword. The sockets of the CNs are connected by edges to the sockets of the
VNs in a one-to-one fashion; the resulting graph is called the \emph{Tanner graph} of the D-GLDPC
code. A \emph{codeword} of the D-GLDPC code is an assignment of an information word to each VN such
that the local encoding of this word at each VN assigns an encoded bit to each edge of the Tanner
graph in such a way that the resulting configuration forms a valid local codeword from the
perspective of \emph{every} CN. It is easily seen that if the local code at each CN is a
\emph{single parity-check} code and if the local code at each VN is a \emph{repetition} code,
the resulting D-GLDPC code reduces to an ordinary LDPC code. 

\subsection{MET D-GLDPC Code Ensemble Definition}\label{subsec:MET_ensemble_def}
In MET D-GLDPC codes, we distinguish between $n_e$ different edge types. Each edge type is
identified by an index in the set $\scrE=\{1,2,\dots,n_e\}$. Furthermore, we distinguish between
different VN types and different CN types. Each VN type is identified by a triplet
$\gamma=(v_{\gamma},\bfb_{\gamma},\bfd_{\gamma})$, where: 
\begin{itemize}
\item $v_{\gamma}$ identifies a $(q_{\gamma},k_{\gamma})$ variable component code, where
$q_{\gamma}$ is the code length and $k_{\gamma}$ the code dimension, and a specific
encoder (i.e., generator matrix $\bfG_{\gamma}$) of it;
\item $\bfb_{\gamma}$ is a binary vector of length $k_{\gamma}$ which specifies the local puncturing
pattern for the VN. Specifically, for $i\in\{1,2,\dots,k_{\gamma}\}$, if $b_{\gamma,i}=0$ then the
corresponding encoded
bit of the D-GLDPC code is punctured, and it is not punctured otherwise;
\item $\bfd_{\gamma}$ is a vector of length $q_{\gamma}$ whose $i$-th element $d_{\gamma,i}\in\scrE$
specifies the edge type of the $i$-th VN socket.
\end{itemize}
Each CN type is identified by a pair
$\delta=(c_{\delta},\bfd_{\delta})$, where:
\begin{itemize}
\item $c_{\delta}$ identifies an $(s_{\delta},h_{\delta})$ check component code (regardless of its
representation), where $s_{\delta}$ is the code length and $h_{\delta}$ the code dimension; 
\item $\bfd_{\delta}$ is a vector of length $s_{\delta}$ whose $i$-th element
$d_{\delta,i}\in\scrE$ specifies the edge type of the $i$-th CN socket.
\end{itemize}

The set of all VN types $\gamma$ is denoted by $\scrF_V$, and set of all CN types $\delta$ by
$\scrF_C$. Moreover, the fraction of edges of type $l\in\scrE$ connected to VNs of type $\gamma\in
\scrF_V$ is denoted by $\lambda_{\gamma,l}$, while the fraction of edges of type $l\in\scrE$
connected to CNs of type $\delta\in \scrF_C$ by $\rho_{\delta,l}$. We have $\lambda_{\gamma,l}>0$
if and only if the generic VN of type $\gamma$ has at least one socket of type $l$, and
$\lambda_{\gamma,l}=0$ otherwise. An analogous statement can be made regarding $\rho_{\delta,l}$.
Also, $q_{\gamma,l}$ and $s_{\delta,l}$ denote the number of sockets of type $l$ for a VN of type
$\gamma$ and for a CN of type $\delta$, respectively. The constraints $\sum_{l\in\scrE} q_{\gamma,l}
= q_{\gamma}$ $\forall \gamma \in \scrF_V$ and $\sum_{l\in\scrE} s_{\delta,l} = s_{\delta}$ $\forall
\delta \in \scrF_C$ hold. 

An example of D-GLDPC code ensemble is depicted in
Fig.~\ref{fig:ensemble}. As opposed to single-edge type codes, where a unique edge interleaver for
all edges is present, for MET codes a dedicated edge interleaver is present for all edges of the
same type. For a given codeword length, each code in the ensemble corresponds to a specific
realization of the $n_e$ edge interleavers, where all realizations of each edge interleaver are
equiprobable.

In the following, we make the assumption that there are no VNs and CNs with local minimum distance
$1$. We also make the assumption that no encoded bit of the D-GLDPC code is punctured, i.e., for
each VN type $\gamma=(v_{\gamma},\bfb_{\gamma},\bfd_{\gamma})\in\mathscr{F}_V$ the vector
$\bfb_{\gamma}$ has no
`$0$' entries. Finally, we assume that no variable or check component code has idle bits. 

\begin{figure}[t]
\begin{center}
\includegraphics[width=0.9\columnwidth]{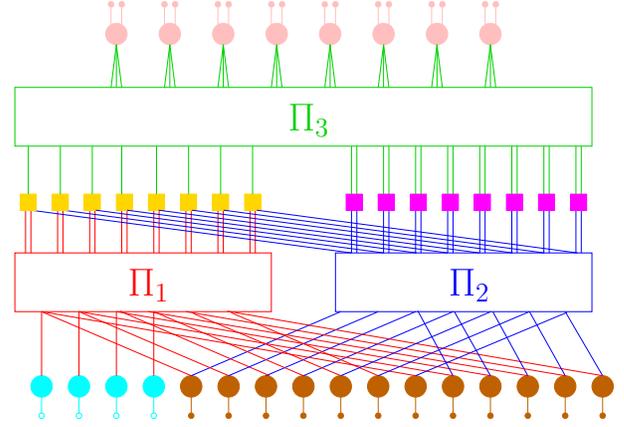}
\end{center}
\caption{Example of a MET D-GLDPC code ensemble. The set of edge types is
$\scrE=\{1,2,3\}$, where edges of type $1$, $2$ and $3$ are depicted in red, blue and green,
respectively. A separate edge-interleaver is present for each edge type. The set of VN types is
$\scrF_V=\{\gamma_1,\gamma_2,\gamma_3\}$, where the VNs of type $\gamma_1$, $\gamma_2$ and
$\gamma_3$ are represented as cyan, brown and pink circles, respectively. The VNs of type $\gamma_1$
are $(1,1)$ codes, those of type $\gamma_2$ are $(2,1)$ codes, and those of type $\gamma_3$ are
$(3,2)$ codes. Each VN of type $\gamma_1$ and $\gamma_2$ is associated with one encoded bit of the
D-GLDPC code, each VN of type $\gamma_3$ with two encoded bits of the D-GLDPC code. The encoded bits
corresponding to type-$\gamma_1$ VNs are punctured. The set of CN types is
$\scrF_C=\{\delta_1,\delta_2\}$, where the CNs of type $\delta_1$ and $\delta_2$ are represented as
yellow and magenta squares, respectively. If each CN of type $\delta_1$ introduces two parity-check
equations and each CN of type $\delta_2$ one parity-check equation, then the overall code dimension
is $K=8$. Since four bits are punctured, the overall codeword length is $N=28$, and the code rate is
$R=2/7$.}
\label{fig:ensemble}
\end{figure}

\subsection{Further Definitions}\label{subsec:further_def}
Throughout the paper, vectors are intended as column vectors. We define $\bfzero$ and $\bfone$ as
the vectors of length $n_e$ whose elements are all equal to $0$ and all equal to $1$, respectively.
Morever, we define $\bfone_e$ as the vector of length $n_e$ whose elements are all equal to $0$
except the element in position $e$ which is equal to $1$. The subset of VN types $\gamma$ with local
minimum distance $2$ is denoted by $\scrF_{V2}\subseteq \scrF_V$, and the subset of CN types
$\delta$ with local minimum distance $2$ by $\scrF_{C2}\subseteq \scrF_C$. 

For $\delta\in \scrF_{C2}$ and $l,m\in\scrE$, we denote by $\xi_{2}^{(\delta)}(l,m)$ the number of
\emph{ordered pairs} of sockets of a CN of type $\delta$, such that the first socket is of
type $l$ and the second of type $m$, and such that the assignment of a `$1$' to these sockets and a
`$0$' to all other CN sockets results in a (weight-$2$) local codeword. Note that
$\xi_{2}^{(\delta)}(l,m)=\xi_{2}^{(\delta)}(m,l)$. For $l,m\in\scrE$, we define the nonnegative real
parameter $C^{l,m}$ as 
\begin{align}\label{eq:Clm}
C^{l,m} := \sum_{\delta\in\scrF_{C2}} \left(\frac{\rho_{\delta,l}}{s_{\delta,l}}\right)
\xi_{2}^{(\delta)}(l,m)\,.
\end{align}
If CNs of type $\delta\in\scrF_{C2}$ have no sockets of type $l$ (in which case $\rho_{\delta,l}=0$
and $s_{\delta,l}=0$), then we set $C^{l,m}=0$ by definition. Denoting by $A_2^{(\delta)}(l,m)$ the
number of weight-$2$ local codewords of a CN of type $\delta\in\scrF_{C2}$ such that one of the two
`$1$' local
encoded bits corresponds to a socket of edge type $l$ and the other to a socket of edge type $m$, we
have $\xi_{2}^{(\delta)}(l,m)=A_{2}^{(\delta)}(l,m)$ for $l\neq m$ and $\xi_{2}^{(\delta)}(l,l)=2
A_{2}^{(\delta)}(l,l)$.

For $\gamma\in\scrF_{V2}$ and $l,m\in\scrE$, we denote by $\chi_{2,u}^{(\gamma)}(l,m)$ the number of
\emph{ordered pairs} of sockets of a VN of type $\gamma$ such that the first socket is of
type $l$ and the second of type $m$, and such that the assignment of a `$1$' to these sockets and a
`$0$' to all other VN sockets results in a (weight-$2$) local codeword generated by a local input
word of weight $u$. Similarly to the CN case, we have
$\chi_{2,u}^{(\gamma)}(l,m)=\chi_{2,u}^{(\gamma)}(m,l)$. The nonnegative polynomial $P^{l,m}(x)$
(with real coefficients) is defined as
\begin{align}\label{eq:Plm}
P^{l,m}(x) := \sum_{\gamma\in\scrF_{V2}} \left(\frac{\lambda_{\gamma,l}}{q_{\gamma,l}}\right)
\sum_{u=1}^{k_{\gamma}}\chi_{2,u}^{(\gamma)}(l,m)\, x^u\,.
\end{align}
If VNs of type $\gamma\in\scrF_{V2}$ have no sockets of type $l$ (in which case
$\lambda_{\gamma,l}=0$ and $q_{\gamma,l}=0$), then we set $P^{l,m}(x)=0$ by definition.
Moreover, denoting by $B_{2,u}^{(\gamma)}(l,m)$ the number of weight-$2$ local codewords of a VN of
type $\gamma\in\scrF_{V2}$ generated by local weight-$u$ input words, and such that one of the two
`$1$' local
encoded bits corresponds to a socket of edge type $l$ and the other to a socket of edge type $m$, we
have $\chi_{2,u}^{(\gamma)}(l,m)=B_{2,u}^{(\gamma)}(l,m)$ for $l\neq m$ and
$\chi_{2,u}^{(\gamma)}(l,l)=2 B_{2,u}^{(\gamma)}(l,l)$.

\medskip
\begin{remark} For $l,m\in\scrE$ and $l \neq m$, in general we have $C^{l,m} \neq C^{m,l}$ and
$P^{l,m}(x) \neq P^{m,l}(x)$.
\end{remark}

\medskip
\begin{remark} For single-edge type codes, $\scrE$ is a singleton $\scrE=\{l\}$, $P^{l,l}(x)$
reduces to the polynomial $P(x)$ and $C^{l,l}$ to the parameter $C$ characterizing ordinary D-GLDPC
codes \cite{paolini09:stability,flanagan11:growth_rate}. For codes constructed from protographs, no
two sockets of a VN or CN are of the same type. Hence, in this case we have $P^{l,l}(x)=0$ and
$C^{l,l}=0$ for all $l\in\scrE$.
\end{remark}

\subsection{Multi-Type Information Functions}\label{subsec:multi_type_IF}
Although in this paper we make some assumptions on the VNs and CNs (see the last paragraph of
Sec.~\ref{subsec:MET_ensemble_def}), the definitions provided in this subsection are more general
and do not rely on such assumptions.

Consider a CN of type $\delta=(c_{\delta},\bfd_{\delta})\in\scrF_C$, and let $\bfG_{\delta}$ be
\emph{any} generator matrix for the associated component code. From $\bfG_{\delta}$, form $n_e$
matrices $\bfG_{\delta,l}$, where $\mathbf{G}_{\delta,l}$ is the $(h_{\delta} \times s_{\delta,l})$
matrix composed of the columns of $\mathbf{G}_{\delta}$ associated with the bit positions of type
$l\in\scrE$ (irrespective of the order of these columns). Then, for any integer $n_e$-tuple
$\mathbf{g}=(g_1,g_2,\dots,g_{n_e})$ satisfying $0 \leq g_l \leq s_{\delta,l}$ for all $l\in\scrE$,
the $\mathbf{g}$-th multi-type information function of the CN is defined as 
\begin{align}\label{eq:mt_if_definition}
\tilde{e}^{(\delta)}_{\mathbf{g}} := \sum_{\bfS^{(\delta)}_{\mathbf{g}}}
\rank\left(\bfS^{(\delta)}_{\mathbf{g}}\right)
\end{align}
where $\bfS^{(\delta)}_{\mathbf{g}}$ is a matrix formed by selecting $g_l$ columns in
$\bfG_{\delta,l}$ (irrespective of the order of these columns) and where
$\sum_{\bfS^{(\delta)}_{\mathbf{g}}}$ denotes the summation over all $\prod_{l=1}^{n_e}
{s_{\delta,l} \choose g_l}$ matrices $\bfS^{(\delta)}_{\mathbf{g}}$.\footnote{If for some
$l\in\scrE$ the CN has no sockets of type $l$, then $g_l$ is conventionally set to $0$. This
convention shall be adopted also for the multi-type split information functions defined for the
VNs.}

For a VN of type $\gamma=(v_{\gamma},\bfb_{\gamma},\bfd_{\gamma})$, let $\bfG_{\gamma}$ be the
\emph{specific} generator matrix identified by $v_{\gamma}$. Moreover, let $b_{\gamma}$ be the
Hamming weight of $\bfb_{\gamma}$. From $\bfG_{\gamma}$, form $n_e$
matrices $\bfG_{\gamma,l}$, where $\bfG_{\gamma,l}$ is the $(k_{\gamma} \times q_{\gamma,l})$
matrix composed of the columns of $\bfG_{\gamma}$ associated with the bit positions of type
$l\in\scrE$ (irrespective of the order of these columns). Then, for any integer $n_e$-tuple
$\mathbf{g}=(g_1,g_2,\dots,g_{n_e})$ satisfying $0 \leq g_l \leq q_{\gamma,l}$ for all $l\in\scrE$,
and for any integer $0 \leq u \leq k_{\gamma}$, the $(\mathbf{g};u)$-th multi-type split information
function of the VN is~defined~as 
\begin{align}\label{eq:mt_sif_definition}
\tilde{e}^{(\gamma)}_{\mathbf{g};u} := \sum_{\bfS^{(\gamma)}_{\mathbf{g}; u}}
\rank\left(\bfS^{(\gamma)}_{\mathbf{g};u}\right)
\end{align}
where $\bfS^{(\gamma)}_{\mathbf{g};u}$ is a matrix formed by selecting $g_l$ columns in
$\bfG_{\gamma,l}$ (irrespective of the order of these columns) and $u$ columns among the
$b_{\gamma}$ columns of $\mathbf{I}_{k_{\gamma}}$ (order-$k_{\gamma}$ identity matrix) corresponding
to the support of $\mathbf{b}_{\gamma}^T$. In \eqref{eq:mt_sif_definition},
$\sum_{\bfS^{(\gamma)}_{\mathbf{g};u}}$ denotes the summation over all ${b_{\gamma} \choose u}
\prod_{l=1}^{n_e} {q_{\gamma,l} \choose g_l}$ matrices $\bfS^{(\gamma)}_{\mathbf{g};u}$.

While the $\mathbf{g}$-th multi-type information function of a type-$\delta$ CN is
independent of the specific choice of $\bfG_{\delta}$, the $(\mathbf{g};u)$-th multi-type
split information function of a type-$\gamma$ VN depends on the local mapping between information
and encoded bits defined by $\bfG_{\gamma}$. It also depends on the local puncturing pattern
defined by $\bfb_{\gamma}$.

\section{Exit Analysis and BP Decoder Stability}

\medskip
EXIT analysis of a MET D-GLDPC code ensemble with $n_e$ edge types consists of modeling the average
behavior of the iterative decoder, in the asymptotic case where the codeword length tends to
infinity, through an $n_e$-dimensional discrete dynamical system tracking $n_e$ average
extrinsic information values, one for each edge type. For $e\in\scrE$, the $e$-th
value we track is the average extrinsic information over the edges of type $e$, outgoing
from the VN set towards the~CN~set. 

Let $\ell\geq 1$ denote the decoding iteration index. Let $I^{\ell}_{EV,e}$ and
$I^{\ell}_{EC,e}$ be the average extrinsic information over the edges of type $e$ outgoing from the
VN set and from the CN set, at the $\ell$-th decoding iteration, respectively. Moreover, let
$I^{\ell}_{AV,e}$ and $I^{\ell}_{AC,e}$ be the average \emph{a priori} information over the
edges of type $e$ incoming towards the VN set and towards the CN set, at the $\ell$-th decoding
iteration, respectively. EXIT analysis equations of a MET D-GLDPC code ensemble over a BEC with
erasure probability $\epsilon$ may be expressed as
\begin{align}\label{eq:exit_VN}
\left\{\begin{array}{ccc}
I^{\ell}_{EV,1} & =
& \mathsf{I}_{\mathsf{EV}1}(I^{\ell-1}_{AV,1},I^{\ell-1}_{AV,2},\dots,I^{\ell-1}_{AV,{n_e}},
\epsilon) \\
 & \cdots & \\
I^{\ell}_{EV,{n_e}} &
= & \mathsf{I}_{\mathsf{EV}n_e}(I^{\ell-1}_{AV,1},I^{\ell-1}_{AV,2},\dots,I^{\ell-1}_{AV,{
n_e}},\epsilon)
\end{array}\right.
\end{align}
and
\begin{align}\label{eq:exit_CN}
\left\{\begin{array}{ccc}
I^{\ell}_{EC,1} & =
& \mathsf{I}_{\mathsf{EC}1}(I^{\ell}_{AC,1},I^{\ell}_{AC,2},\dots,I^{\ell}_{AC,{n_e}}) \\
 & \cdots & \\
I^{\ell}_{EC,{n_e}} &
= &
\mathsf{I}_{\mathsf{EC}n_e}(I^{\ell}_{AC,1},I^{\ell}_{AC,2},\dots,I^{\ell}_{AC,{n_e}})
\end{array}\right. \, .
\end{align}
The $2 n_e$ equations \eqref{eq:exit_VN} and \eqref{eq:exit_CN}, together with
$I^{\ell}_{AC,i}=I^{\ell}_{EV,i}$ $\forall i\in\scrE$, $I^{\ell}_{AV,i}=I^{\ell}_{EC,i}$ $\forall
i\in\scrE$, and $I^{0}_{AV,i}=0$ $\forall i\in\scrE$, define a recursion that can be expressed in
the compact form
\begin{align}\label{eq:exit_compact}
\mbox{\boldmath $I$}^{\ell+1}_{EV} = \mbox{\boldmath $f$}(\mbox{\boldmath
$I$}^{\ell}_{EV},\epsilon)
\end{align}
for $\ell\geq 0$ and where $\mbox{\boldmath $I$}_{EV}=[I_{EV,1},I_{EV,2},\dots,I_{EV,{n_e}}]^T$ is
a column vector whose elements are the $n_e$ values to be tracked. The $n_e$-dimensional discrete
dynamical system \eqref{eq:exit_compact} models the asymptotic (in terms of codeword length)
evolution of the BP decoder over a BEC with erasure probability $\epsilon$. The
function $\mbox{\boldmath $f$}(\cdot)$ can be evaluated exploiting results developed
in \cite{paolini10:multi}. In more detail, neglecting the iteration index $\ell$, for $e\in\scrE$ we
have
$\mathsf{I}_{\mathsf{EV},e}(I_{AV,1},I_{AV,2},\dots,I_{AV,n_e},\epsilon)=\sum_{\gamma\in\scrF_V}
\lambda_{\gamma,e} \,
\mathsf{I}_{\mathsf{EV},e}^{(\gamma)}(I_{AV,1},I_{AV,2},\dots,I_{AV,n_e},\epsilon)$
where
\begin{align}\label{eq:GC101}
\mathsf{I}_{\mathsf{EV},e}^{(\gamma)} & (I_{AV,1},I_{AV,2},\dots,I_{AVn_e},\epsilon) = 1 -
\frac{1}{q_{\gamma,e}} \sum_{z=0}^{b_{\gamma}} \epsilon^z
(1-\epsilon)^{b_{\gamma}-z} \notag \\
\, & \times \sum_{t_1=0}^{q_{\gamma,1}} (1-I_{AV,1})^{t_1}
(I_{AV,1})^{q_{\gamma,1}-t_1} \cdots \notag \\
\, & \times \sum_{t_e=0}^{q_{\gamma,e}-1} (1-I_{AV,e})^{t_e}
(I_{AV,e})^{q_{\gamma,e}-1-t_e} \cdots \notag \\
\, & \times \sum_{t_{n_e}=0}^{q_{\gamma,n_e}} (1-I_{AV,n_e})^{t_{n_e}}
(I_{AV,n_e})^{q_{\gamma,n_e}-t_{n_e}}\, a_{\mathbf{t}, h}^{(\gamma,e)}
\end{align}
and
\begin{align}\label{eq:a_VN}
a_{\mathbf{t}, h}^{(\gamma,e)} & = (q_{\gamma,e} - t_e) \,
\tilde{e}^{(\gamma)}_{q_{\gamma,1}-t_1, \dots, q_{\gamma,n_e}-t_{n_e}; b_{\gamma}-h}  \notag \\
\, & - (t_e+1) \, \tilde{e}^{(\gamma)}_{q_{\gamma,1}-t_1, \dots,
q_{\gamma,e}-t_{e}-1, \dots, q_{\gamma,n_e}-t_{n_e}; b_{\gamma}-h}.
\end{align}
Moreover, we have
$\mathsf{I}_{\mathsf{EC},e}(I_{AC,1},I_{AC,2},\dots,I_{AC,{n_e}})=\sum_{
\delta\in\scrF_C} \rho_{\delta,e} \,
\mathsf{I}_{\mathsf{EC},e}^{(\delta)}(I_{AC,1},I_{AC,2},\dots,I_{AC,{n_e}})$, where
\begin{align}\label{eq:GC102}
\mathsf{I}_{\mathsf{EC},e}^{(\delta)} & (I_{AC,1},I_{AC,2},\dots,I_{AC,{n_e}}) \notag \\ 
\, & = 1 -
\frac{1}{s_{\delta,e}} \sum_{t_1=0}^{s_{\delta,1}}
(1-I_{AC,1})^{t_1} (I_{AC,1})^{s_{\delta,1}-t_1}
\cdots \notag \\
\, & \times \sum_{t_e=0}^{s_{\delta,e}-1} (1-I_{AC,e})^{t_e}
(I_{AC,e})^{s_{\delta,e}-1-t_e} \cdots \notag \\
\, & \times  
\sum_{t_{n_e}=0}^{s_{\delta,n_e}} (1-I_{AC,n_e})^{t_{n_e}}
(I_{AC,n_e})^{s_{\delta,n_e}-t_{n_e}}\, a_{\mathbf{t}}^{(\delta,e)}
\end{align}
and
\begin{align}\label{eq:a_CN}
a_{\mathbf{t}}^{(\delta,e)} & = (s_{\delta,e} - t_e) \,
\tilde{e}^{(\delta)}_{s_{\delta,1}-t_1, \dots, s_{\delta,n_e}-t_{n_e}} \notag \\
\, & - (t_e+1) \, \tilde{e}^{(\gamma)}_{s_{\delta,1}-t_1, \dots,
s_{\delta,e}-t_{e}-1, \dots, s_{\delta,n_e}-t_{n_e}}.
\end{align}

\medskip
\begin{lemma}\label{lemma:fixed_point} For a MET D-GLDPC code ensemble such that all VNs and CNs
have local minimum distance
at least $2$ and such that no encoded bit is punctured ($\bfb_{\gamma}$ is the all-$1$
vector for all $\gamma\in\scrF_V$),
$[I_{EV,{1}},I_{EV,{2}},\dots,I_{EV,{n_e}}]^T=\bfone$ is
a fixed point of \eqref{eq:exit_compact} regardless of $\epsilon$, i.e., $\bff
(\bfone,\epsilon)=\bfone$ $\forall\epsilon\in(0,1)$.\footnote{The proof of
Lemma~\ref{lemma:fixed_point} is omitted due to space constraints. The lemma can be easily proved
by proving that, in \eqref{eq:GC101} and \eqref{eq:GC102}, under the two mentioned hypotheses,
$\mathsf{I}^{(\gamma)}_{\mathsf{EV},e} \rightarrow 1$ as $(I_{AV,1},I_{AV,2},\dots,I_{AV,n_e})
\rightarrow (1,1,\dots,1)$ $\forall (\gamma,e) \in \scrF_V \times \scrE$ and
$\mathsf{I}^{(\delta)}_{\mathsf{EC},e} \rightarrow 1$ as $(I_{AC,1},I_{AC,2},\dots,I_{AC,n_e})
\rightarrow (1,1,\dots,1)$ $\forall (\delta,e) \in \scrF_C \times \scrE$.}
\end{lemma}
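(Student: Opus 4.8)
The plan is to verify $\bff(\bfone,\epsilon)=\bfone$ by walking through one pass of the recursion \eqref{eq:exit_compact} with input $\bfone$: set the CN \emph{a priori} values $I_{AC,l}=I_{EV,l}=1$, show that the CN extrinsic vector returned by \eqref{eq:GC102} is $\bfone$, feed it as the VN \emph{a priori} vector $I_{AV,l}=1$, and show that the VN extrinsic vector returned by \eqref{eq:GC101} is again $\bfone$, for every $\epsilon\in(0,1)$. Since $\sum_{\delta\in\scrF_C}\rho_{\delta,e}=1$ and $\sum_{\gamma\in\scrF_V}\lambda_{\gamma,e}=1$ for each $e\in\scrE$, the aggregated EXIT functions are convex combinations of the per-type functions, so it suffices to prove the two per-type identities $\mathsf{I}^{(\delta)}_{\mathsf{EC},e}(\bfone)=1$ for all $(\delta,e)\in\scrF_C\times\scrE$ and $\mathsf{I}^{(\gamma)}_{\mathsf{EV},e}(\bfone,\epsilon)=1$ for all $(\gamma,e)\in\scrF_V\times\scrE$ and all $\epsilon$.

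For the CN identity I would substitute $I_{AC,l}=1$ into \eqref{eq:GC102}. Each factor $(1-I_{AC,l})^{t_l}(I_{AC,l})^{\cdots-t_l}$ equals $0^{t_l}$, which is nonzero only when $t_l=0$; hence the entire nested sum collapses to its single $\mathbf{t}=\bfzero$ term, leaving $\mathsf{I}^{(\delta)}_{\mathsf{EC},e}(\bfone)=1-a^{(\delta,e)}_{\bfzero}/s_{\delta,e}$. By \eqref{eq:a_CN} this coefficient is $a^{(\delta,e)}_{\bfzero}=s_{\delta,e}\,\tilde{e}^{(\delta)}_{\mathbf{s}_\delta}-\tilde{e}^{(\delta)}_{\mathbf{s}_\delta-\bfonee}$, where $\mathbf{s}_\delta=(s_{\delta,1},\dots,s_{\delta,n_e})$. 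The first information function selects all columns of $\bfG_\delta$, so it reduces to the single value $\rank(\bfG_\delta)=h_\delta$; the second sums $\rank(\cdot)$ over the $s_{\delta,e}$ matrices obtained from $\bfG_\delta$ by deleting one type-$e$ column.

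The crux---and the step I expect to be the main obstacle---is showing that deleting any single column of $\bfG_\delta$ leaves its rank equal to $h_\delta$. This is precisely where the minimum-distance hypothesis enters: a generator matrix $\bfG_\delta$ loses row rank upon deletion of column $j$ exactly when some nonzero information word $\mathbf{x}$ yields a codeword $\mathbf{x}^T\bfG_\delta$ supported on $\{j\}$, i.e.\ a weight-$1$ codeword; local minimum distance at least $2$ excludes weight-$1$ codewords, so every single-column deletion preserves rank $h_\delta$. Consequently $\tilde{e}^{(\delta)}_{\mathbf{s}_\delta-\bfonee}=s_{\delta,e}\,h_\delta$, the coefficient $a^{(\delta,e)}_{\bfzero}$ vanishes, and $\mathsf{I}^{(\delta)}_{\mathsf{EC},e}(\bfone)=1$.

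The VN identity follows the same pattern with two extra features. Substituting $I_{AV,l}=1$ into \eqref{eq:GC101} again kills every term but $\mathbf{t}=\bfzero$, reducing the function to $1-\frac{1}{q_{\gamma,e}}\sum_{z}\epsilon^z(1-\epsilon)^{b_\gamma-z}\,a^{(\gamma,e)}_{\bfzero,z}$, and by \eqref{eq:a_VN} we have $a^{(\gamma,e)}_{\bfzero,z}=q_{\gamma,e}\,\tilde{e}^{(\gamma)}_{\mathbf{q}_\gamma;u}-\tilde{e}^{(\gamma)}_{\mathbf{q}_\gamma-\bfonee;u}$ with $u=b_\gamma-z$ and $\mathbf{q}_\gamma=(q_{\gamma,1},\dots,q_{\gamma,n_e})$. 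Here each matrix additionally carries $u$ columns of $\bfI_{k_\gamma}$, but since $\bfG_\gamma$ already has full row rank $k_\gamma$ and rank is capped by the number of rows, appending identity columns never changes the rank; thus $\tilde{e}^{(\gamma)}_{\mathbf{q}_\gamma;u}=\binom{b_\gamma}{u}k_\gamma$, and by the same single-column-deletion argument (again invoking minimum distance $\ge 2$ for $\bfG_\gamma$) every term of $\tilde{e}^{(\gamma)}_{\mathbf{q}_\gamma-\bfonee;u}$ also has rank $k_\gamma$, giving $q_{\gamma,e}\binom{b_\gamma}{u}k_\gamma$. Hence $a^{(\gamma,e)}_{\bfzero,z}=0$ for each $z$ \emph{independently}, so the whole $\epsilon$-weighted sum vanishes and $\mathsf{I}^{(\gamma)}_{\mathsf{EV},e}(\bfone,\epsilon)=1$ for every $\epsilon$; the no-puncturing assumption guarantees $b_\gamma=k_\gamma$ with the identity columns ranging over the full support of $\bfb_\gamma$. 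Combining the two per-type identities then yields $\bff(\bfone,\epsilon)=\bfone$ for all $\epsilon\in(0,1)$.
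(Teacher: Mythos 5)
Your proof is correct and follows exactly the route the paper indicates: the paper omits the details but states in its footnote that it suffices to show $\mathsf{I}^{(\gamma)}_{\mathsf{EV},e} \rightarrow 1$ and $\mathsf{I}^{(\delta)}_{\mathsf{EC},e} \rightarrow 1$ as the \emph{a priori} inputs tend to $\bfone$, which is precisely what you establish by collapsing \eqref{eq:GC101} and \eqref{eq:GC102} to their $\mathbf{t}=\bfzero$ terms and invoking rank preservation under single-column deletion (via minimum distance at least $2$ and full row rank of the generator matrices). Your filled-in details, in particular the observation that $a^{(\delta,e)}_{\mbox{\boldmath \scriptsize{$0$}}}$ and $a^{(\gamma,e)}_{\mbox{\boldmath \scriptsize{$0$}},z}$ vanish termwise, are sound and complete the argument the authors only sketched.
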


\medskip
The fixed point $\mbox{\boldmath $I$}_{EV}=\bfone$ corresponds to a state of the system in
which no erasure messages are exchanged between the VN set and the CN set, i.e., in which all
encoded bits of the D-GLDPC code are known. A transmission over the BEC may be then modeled as a
perturbation of the system state from $\mbox{\boldmath $I$}_{EV}=\bfone$ to $\mbox{\boldmath
$I$}_{EV}=[\mathsf{I}_{\mathsf{EV},1}(0,0,\dots,0,\epsilon),\dots,\mathsf{I}_{\mathsf{EV},n_e}(0,0,
\dots,0,\epsilon)]^T:=\mbox{\boldmath $I$}^0_{EV}(\epsilon)$, and the corresponding evolution of the
BP decoder is modeled by \eqref{eq:exit_compact}. Decoding is successful when,
starting from $\mbox{\boldmath $I$}^0_{EV}(\epsilon)$, we have
$\lim_{\ell\rightarrow\infty}\mbox{\boldmath $I$}^{\ell}_{EV}=\bfone$. In order for the limit to be
$\bfone$, it is necessary that the steady-state equilibrium $\mbox{\boldmath $I$}_{EV}=\bfone$
acts as a \emph{local attractor} for the system, or, equivalently, that it is \emph{locally
stable}. The stability condition is established in the following theorem, which represents the main
contribution of this paper.

% We are interested in developing a necessary
% and sufficient condition on $\epsilon$ for local stability of this fixed point, i.e., a condition
% under which the dynamical system \eqref{eq:exit_compact} modeling the iterative decoder converges
% to
% $\bfone$ as $\ell\rightarrow\infty$, assuming the starting point of the recursion is sufficiently
% close to $\bfone$. In other words, we want to develop a condition under which the ``all erasures
% have been recovered'' state is a local attractor for the iterative decoder.
% 
% To perform the stability study, we have to express the Jacobian of the function {\boldmath $f$} in
% \eqref{eq:exit_compact}, calculated in $[1,1,\dots,1]$, and then we have to analyze its
% eigenvalues.

\medskip
\begin{theorem}\label{theorem:J=PC} Consider a MET D-GLDPC code ensemble with $n_e$ edge types.
Assume that there are no VNs and CNs with local minimum distance $1$ and that no encoded bit of the
D-GLDPC code is punctured ($\bfb_{\gamma}$ is the all-$1$ vector for all $\gamma\in\scrF_{V}$).
Define $\mbox{\boldmath $C$}$ as the $(n_e \times n_e)$ nonnegative matrix whose $(l,m)$-th entry is
$C^{l,m}$ in \eqref{eq:Clm}. Moreover, define $\mbox{\boldmath $P$}(\epsilon)$ as the $(n_e \times
n_e)$ nonnegative matrix of polynomials whose $(l,m)$-th entry is $P^{l,m}(\epsilon)$ in
\eqref{eq:Plm}. Then, the fixed point $\mbox{\boldmath $I$}_{EV}=\bfone$ of \eqref{eq:exit_compact}
is locally stable if and only if
\begin{align}\label{eq:stability_condition}
\sigma\left(\mbox{\boldmath $P$}(\epsilon) \, \mbox{\boldmath $C$}\right)<1
\end{align}
being $\sigma(\mbox{\boldmath $A$})$ the spectral radius of a square matrix
\mbox{\boldmath $A$}, i.e., the magnitude of the eigenvalue of \mbox{\boldmath $A$} with the largest
magnitude.
\end{theorem}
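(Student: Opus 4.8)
The plan is to treat \eqref{eq:exit_compact} as a discrete-time dynamical system and to decide the stability of the fixed point $\bfone$ from its Jacobian there; the heart of the argument is to show that this Jacobian equals exactly $\mbox{\boldmath $P$}(\epsilon)\,\mbox{\boldmath $C$}$, after which \eqref{eq:stability_condition} follows from standard linearization theory. First I would make the compositional structure of $\bff$ explicit. Using the identifications $I^{\ell}_{AC,i}=I^{\ell}_{EV,i}$ and $I^{\ell}_{AV,i}=I^{\ell}_{EC,i}$, a single application of $\bff(\cdot,\epsilon)$ amounts to passing $\mbox{\boldmath $I$}_{EV}$ first through the CN map \eqref{eq:exit_CN} and then through the VN map \eqref{eq:exit_VN}, i.e. $\bff(\cdot,\epsilon)=\mathsf{I}_{\mathsf{EV}}\!\big(\mathsf{I}_{\mathsf{EC}}(\cdot),\epsilon\big)$. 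By Lemma~\ref{lemma:fixed_point} the CN map fixes $\bfone$ (so that the intermediate point is again $\bfone$), whence both component maps are to be differentiated at $\bfone$ and the chain rule gives $\mbox{\boldmath $J$}=\mbox{\boldmath $J$}_{\mathsf{EV}}\,\mbox{\boldmath $J$}_{\mathsf{EC}}$, where $\mbox{\boldmath $J$}_{\mathsf{EC}}$ and $\mbox{\boldmath $J$}_{\mathsf{EV}}$ denote the Jacobians of the CN and VN maps at $\bfone$.

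Next I would compute $\mbox{\boldmath $J$}_{\mathsf{EC}}$ by linearizing \eqref{eq:GC102} about $\bfone$. Writing $I_{AC,j}=1-x_j$ turns each factor $(1-I_{AC,j})^{t_j}$ into $x_j^{t_j}$, so that near $\bfone$ the summand indexed by $\mathbf{t}$ contributes at order $\prod_j x_j^{t_j}$. Consequently only $\mathbf{t}=\bfzero$ survives at zeroth order, reproducing the fixed-point value $1$ (equivalently $a^{(\delta,l)}_{\bfzero}=0$, which also kills the derivative coming from the $(1-x_j)$ factors of that term), while the coefficient of $x_m$ is supplied solely by the single-erasure index with $t_m=1$ and $t_j=0$ for $j\neq m$. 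Hence $\partial\mathsf{I}^{(\delta)}_{\mathsf{EC},l}/\partial I_{AC,m}\big|_{\bfone}=a^{(\delta,l)}_{\mathbf{t}}/s_{\delta,l}$ for that $\mathbf{t}$. The crucial step is a combinatorial identity showing that the surviving difference of multi-type information functions in \eqref{eq:a_CN} equals $\xi_{2}^{(\delta)}(l,m)$, the number of ordered weight-$2$ local-codeword socket pairs; summing over $\delta$ with the weights $\rho_{\delta,l}$ (recall $\mathsf{I}_{\mathsf{EC},l}=\sum_{\delta}\rho_{\delta,l}\,\mathsf{I}^{(\delta)}_{\mathsf{EC},l}$) then reproduces \eqref{eq:Clm} exactly, so $[\mbox{\boldmath $J$}_{\mathsf{EC}}]_{l,m}=C^{l,m}$, i.e. $\mbox{\boldmath $J$}_{\mathsf{EC}}=\mbox{\boldmath $C$}$.

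The VN map \eqref{eq:GC101} is handled identically, the only new ingredient being the binomial weighting $\epsilon^{z}(1-\epsilon)^{b_{\gamma}-z}$ over the erasure pattern of the local information bits. After the same linearization the surviving single-erasure term carries a factor $\epsilon^{u}$, where $u$ is the weight of the local input word that generates the weight-$2$ codeword, and the accompanying difference of multi-type split information functions in \eqref{eq:a_VN} collapses to $\chi^{(\gamma)}_{2,u}(l,m)$. Summing over $u$ and over $\gamma$ with the weights $\lambda_{\gamma,l}$ reproduces \eqref{eq:Plm} evaluated at $x=\epsilon$, so $[\mbox{\boldmath $J$}_{\mathsf{EV}}]_{l,m}=P^{l,m}(\epsilon)$, i.e. $\mbox{\boldmath $J$}_{\mathsf{EV}}=\mbox{\boldmath $P$}(\epsilon)$. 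Combining the two computations with the chain rule yields $\mbox{\boldmath $J$}=\mbox{\boldmath $P$}(\epsilon)\,\mbox{\boldmath $C$}$.

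Finally I would invoke the linearization theorem for discrete dynamical systems: the fixed point $\bfone$ of \eqref{eq:exit_compact} is a local attractor when $\sigma(\mbox{\boldmath $J$})<1$ and is not when $\sigma(\mbox{\boldmath $J$})>1$, which already establishes \eqref{eq:stability_condition} away from the boundary. The delicate point for the ``if and only if'' is the critical case $\sigma(\mbox{\boldmath $J$})=1$: since $\mbox{\boldmath $J$}=\mbox{\boldmath $P$}(\epsilon)\,\mbox{\boldmath $C$}$ is entrywise nonnegative, the Perron--Frobenius theorem provides a nonnegative eigenvector associated with the eigenvalue $\sigma(\mbox{\boldmath $J$})$, and the componentwise monotonicity of the EXIT update maps then lets one argue that a perturbation along this direction is not drawn back to $\bfone$, excluding $\sigma=1$ from stability. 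The main obstacle is the combinatorial core of the two middle paragraphs, namely proving that the differences of multi-type (split) information functions in \eqref{eq:a_CN} and \eqref{eq:a_VN}, evaluated at the single-erasure indices, reduce \emph{precisely} to the weight-$2$ socket-pair counts $\xi_{2}^{(\delta)}(l,m)$ and $\chi^{(\gamma)}_{2,u}(l,m)$; granting this, the remainder is the chain rule together with textbook stability theory.
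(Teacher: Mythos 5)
Your proposal is correct and follows essentially the same route as the paper: both reduce the theorem to showing that the Jacobian of \eqref{eq:exit_compact} at $\bfone$ equals $\mbox{\boldmath $P$}(\epsilon)\,\mbox{\boldmath $C$}$ via the chain rule, identify the VN and CN derivative entries with $P^{l,m}(\epsilon)$ and $C^{l,m}$ through the same combinatorial reduction to weight-$2$ local codeword counts (a step the paper likewise delegates to a Lemma-4-style argument from \cite{paolini09:stability}), and then invoke standard linearization theory. The only minor differences are cosmetic: the paper obtains the CN derivatives by viewing a CN as a VN with all information bits punctured (equivalently, setting $\epsilon=1$ in the VN formulas), whereas you linearize \eqref{eq:GC102} directly, and your Perron--Frobenius discussion of the critical case $\sigma=1$ is more explicit than anything the paper offers.
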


\medskip
Interestingly, inequality \eqref{eq:stability_condition} represents the ``natural'' extension to
the MET framework of the condition $P(\epsilon)C<1$ proved in \cite{paolini09:stability} for the
single-edge type case. A sketch of proof of Theorem~\ref{theorem:J=PC} is provided in
Section~\ref{section:sketch}.
Theorem~\ref{theorem:J=PC} allows us to develop a simple sufficient condition
for local stability of fixed point $\mbox{\boldmath $I$}_{EV}=\bfone$, as follows.

\medskip
\begin{corollary}
Consider a MET D-GLDPC code ensemble with $n_e$ edge types. Assume that there are no VNs and CNs
with local minimum distance $1$ and that no D-GLDPC encoded bit is punctured. Moreover, assume that
the follwing condition is satisfied: If a socket of VN of type $\gamma\in\scrF_{V2}$, associated
with the support of a weight-$2$ local codeword, is of type $l\in\scrE$, then for all
$\delta\in\scrF_{C2}$ a CN of type $\delta$ has no sockets of type $l$ associated with the support
of a weight-$2$ local codeword. Then, the fixed point $\mbox{\boldmath $I$}_{EV}=\bfone$ of
\eqref{eq:exit_compact} is locally stable for any BEC erasure probability $\epsilon$.
\end{corollary}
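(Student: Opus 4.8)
The plan is to show that, under the stated hypothesis, the product matrix $\mbox{\boldmath $P$}(\epsilon)\,\mbox{\boldmath $C$}$ is identically the zero matrix for every $\epsilon$, so that its spectral radius is $0<1$ and Theorem~\ref{theorem:J=PC} applies directly. The key is to read off, from the definitions \eqref{eq:Clm} and \eqref{eq:Plm}, the ``support'' of each matrix in terms of which edge types can participate in a weight-$2$ local codeword.

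First I would introduce two subsets of $\scrE$: let $\scrE_V$ be the set of edge types $l$ such that some VN of type $\gamma\in\scrF_{V2}$ has a socket of type $l$ lying in the support of a weight-$2$ local codeword, and let $\scrE_C$ be the analogous set for the CNs of type $\delta\in\scrF_{C2}$. The hypothesis of the corollary states exactly that $l\in\scrE_V$ implies $l\notin\scrE_C$, i.e.\ $\scrE_V\cap\scrE_C=\emptyset$.

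Next I would verify the support claims. From \eqref{eq:Plm}, $P^{l,j}(\epsilon)\neq 0$ forces $\chi_{2,u}^{(\gamma)}(l,j)\neq 0$ for some $\gamma\in\scrF_{V2}$ and some $u$, which means that a VN of type $\gamma$ has a weight-$2$ local codeword whose two nonzero positions are sockets of types $l$ and $j$; in particular $j\in\scrE_V$. Symmetrically, from \eqref{eq:Clm}, $C^{j,m}\neq 0$ forces $\xi_{2}^{(\delta)}(j,m)\neq 0$ for some $\delta\in\scrF_{C2}$, so that a CN of type $\delta$ has a weight-$2$ local codeword with nonzero positions at sockets of types $j$ and $m$; hence $j\in\scrE_C$.

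Finally I would combine these observations. The $(l,m)$ entry of the product is $[\mbox{\boldmath $P$}(\epsilon)\,\mbox{\boldmath $C$}]_{l,m}=\sum_{j\in\scrE}P^{l,j}(\epsilon)\,C^{j,m}$, and by the previous step each summand can be nonzero only if the shared index $j$ satisfies $j\in\scrE_V\cap\scrE_C$, which is empty. Hence every entry vanishes, $\mbox{\boldmath $P$}(\epsilon)\,\mbox{\boldmath $C$}=\mathbf{0}$, so $\sigma\!\left(\mbox{\boldmath $P$}(\epsilon)\,\mbox{\boldmath $C$}\right)=0<1$ for all $\epsilon\in(0,1)$, and local stability of $\mbox{\boldmath $I$}_{EV}=\bfone$ for every $\epsilon$ follows from Theorem~\ref{theorem:J=PC}. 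The only real work is the bookkeeping of the support step — correctly translating the combinatorial hypothesis on socket types into a statement that the intermediate summation index $j$ in the matrix product is forced to lie simultaneously in $\scrE_V$ and $\scrE_C$ — which is where I expect the one subtle point to reside; once that is in place, the conclusion is immediate.
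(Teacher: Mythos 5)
Your proof is correct and follows essentially the same route as the paper: the paper's proof is the single observation that under the stated hypothesis $\mbox{\boldmath $P$}(\epsilon)\,\mbox{\boldmath $C$}$ is the all-zero matrix, so $\sigma\left(\mbox{\boldmath $P$}(\epsilon)\,\mbox{\boldmath $C$}\right)=0<1$ and Theorem~\ref{theorem:J=PC} applies. Your support-set bookkeeping (showing the shared summation index $j$ would have to lie in $\scrE_V\cap\scrE_C=\emptyset$) is simply the detailed justification of that one-line observation, and it is carried out correctly.
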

\begin{proof}
Simply observe that in this case $\mbox{\boldmath $P$}(\epsilon) \, \mbox{\boldmath $C$}$ is the
all-zero matrix and, consequently, $\sigma(\mbox{\boldmath $P$}(\epsilon) \, \mbox{\boldmath
$C$})=0$ for all $\epsilon$.
\end{proof}

\section{Sketch of Proof of Theorem \ref{theorem:J=PC}}\label{section:sketch}
For ease of presentation, we consider the case $n_e=2$. The extension of the proof to the case of
$n_e>2$ edge-types is straightforward. 

It is well-known that the local stability of a fixed point of a multidimensional discrete dynamical
system such as \eqref{eq:exit_compact} depends on the eigenvalues of the Jacobian matrix calculated
in the fixed point. Specifically, the fixed point is a local attractor when the magnitude
of all eigenvalues of the Jacobian matrix is less than $1$ or, equivalently, if and only if the
spectral radius of the Jacobian matrix is less than $1$. Hence, we need to
prove that $\mbox{\boldmath $J$}(\mbox{\boldmath $1$},\epsilon)=\mbox{\boldmath $P$}(\epsilon) \,
\mbox{\boldmath $C$}$, where $\mbox{\boldmath $J$}(\mbox{\boldmath $1$},\epsilon)$ is the $(n_e
\times n_e)$ Jacobian matrix of $\mbox{\boldmath $f$}(\mbox{\boldmath $I$}_{EV},\epsilon)$,
calculated in $\mbox{\boldmath $I$}_{EV}=\bfone$.

For $l,m\in\{1,2\}$,
the $(l,m)$-th entry of $\mbox{\boldmath $J$}(\mbox{\boldmath $1$},\epsilon)$, is given~by
\begin{align}\label{eq:Jlm_expanded}
J^{l,m}(\mbox{\boldmath $1$},\epsilon) = \sum_{e=1}^{2} \frac{\partial\,
\mathsf{I}_{\mathsf{EV},l}}{\partial I_{AV,e}}(\mbox{\boldmath $1$},\epsilon) \cdot \frac{\partial\,
\mathsf{I}_{\mathsf{EC},e}}{\partial I_{AC,m}}(\mbox{\boldmath $1$}) \, .
\end{align}

Consider now a generic $(q_{\gamma},k_{\gamma})$ VN of type $\gamma\in\scrF_V$ having at least one
codeword socket of edge type $e$. Using \eqref{eq:GC101} and \eqref{eq:a_VN}, it is easy to show
that
\begin{align}\label{eq:lim_IEVl_IAVe}
\, & \lim_{\scriptsize{\mbox{\boldmath $I$}_{AV} \rightarrow \mbox{\boldmath $1$}}} \frac{\partial\,
\mathsf{I}_{\mathsf{EV},l}^{(\gamma)}}{\partial I_{AV,e}^{(\gamma)}}(\mbox{\boldmath
$I$}_{AV},\epsilon) \notag \\ 
& = -\frac{1}{q_{\gamma,l}} \sum_{z=0}^{k_{\gamma}} \epsilon^z (1-\epsilon)^{k_{\gamma}-z}
[(q_{\gamma,e}-1) a^{(\gamma,l)}_{\mbox{\boldmath \scriptsize{$0$}},z} -
a^{(\gamma,l)}_{\mbox{\boldmath \scriptsize{$1$}}_e,z}] \notag \\
\, & = 
\frac{1}{q_{\gamma,l}} \sum_{z=0}^{k_{\gamma}} \epsilon^z (1-\epsilon)^{k_{\gamma}-z}
a^{(\gamma,l)}_{\mbox{\boldmath \scriptsize{$1$}}_e,z}
\end{align}
where the last equality is due to $a^{(\gamma,l)}_{\mbox{\boldmath \scriptsize{$0$}},z}=0$. In fact,
for $l=1$ we have $a^{(\gamma,1)}_{\mbox{\boldmath \scriptsize{$0$}},z}$ $= q_{\gamma,1}\cdot
\tilde{e}^{(\gamma)}_{q_{\gamma,1},q_{\gamma,2},k_{\gamma}-z}-\tilde{e}^{(\gamma)}_{q_{\gamma,1}-1,
q_ { \gamma , 2 } , k_ { \gamma }
-z}=q_{\gamma,1}{k_{\gamma,1} \choose z} k_{\gamma,1} - q_{\gamma,1}{k_{\gamma,1} \choose z}
k_{\gamma,1}=0$, and in an analogous way we can show that $a^{(\gamma,2)}_{\mbox{\boldmath
\scriptsize{$0$}},z}=0$. Next, we develop \eqref{eq:lim_IEVl_IAVe}, assuming $l=1$, in the two cases
$\bfonee=[1,0]^T$ and $\bfonee=[0,1]^T$. From \eqref{eq:a_VN} and from the definition of multi-type
information function in Section~\ref{subsec:multi_type_IF}, we have
\begin{align}\label{eq:a10u}
\, & \sum_{z=0}^{k_{\gamma}} \epsilon^z (1-\epsilon)^{k_{\gamma}-z} a^{(\gamma,1)}_{[1,0],z} =
\sum_{z=0}^{k_{\gamma}} \epsilon^z (1-\epsilon)^{k_{\gamma}-z} \notag \\ 
\, & \times \left[
(q_{\gamma,1}-1)\tilde{e}^{(\gamma)}_{q_{\gamma,1}-1,q_{\gamma,2},k_{\gamma}-z}-2\,
\tilde{e}^{(\gamma)}_{q_{\gamma,1}-2,q_{\gamma,2},k_{\gamma}-z} \right] \notag \\
\, & = 2 \sum_{z=0}^{k_{\gamma}} \epsilon^z (1-\epsilon)^{k_{\gamma}-z} \notag \\ 
\, & \times \Big[ \frac{q_{\gamma,1}(q_{\gamma,1}-1)k_{\gamma}{k_{\gamma}\choose
z}}{2}-\tilde{e}^{(\gamma)}_{q_{\gamma,1}-2,q_{\gamma,2},k_{\gamma}-z}\Big] \notag \\
\, & = 2 \sum_{z=0}^{k_{\gamma}} \epsilon^z (1-\epsilon)^{k_{\gamma}-z} \notag \\
\, & \sum_{\bfS_{q_{\gamma,1}-2,q_{\gamma,2};k_{\gamma}-z}} \left( k_{\gamma} - \mathrm{rank}
(\bfS_{q_{\gamma,1}-2,q_{\gamma,2};k_{\gamma}-z}) \right)
\end{align}
and
\begin{align}\label{eq:a01u}
\, & \sum_{z=0}^{k_{\gamma}} \epsilon^z (1-\epsilon)^{k_{\gamma}-z} a^{(\gamma,1)}_{[0,1],z} =
\sum_{z=0}^{k_{\gamma}} \epsilon^z (1-\epsilon)^{k_{\gamma}-z} \notag \\ 
\, & \times \left[
q_{\gamma,1}\tilde{e}_{q_{\gamma,1},q_{\gamma,2}-1,k_{\gamma}-z}-\tilde{e}_{q_{\gamma,1}-1,q_{
\gamma,2}-1,k_{\gamma}-z} \right] \notag \\
\, & = \sum_{z=0}^{k_{\gamma}} \epsilon^z (1-\epsilon)^{k_{\gamma}-z} \notag \\ 
\, & \times \Big[ q_{\gamma,1}\, q_{\gamma,2}\, k_{\gamma}{k_{\gamma}\choose
z}-\tilde{e}_{q_{\gamma,1}-1,q_{\gamma,2}-1,k_{\gamma}-z} \Big] \notag \\
\, & = \sum_{z=0}^{k_{\gamma}} \epsilon^z (1-\epsilon)^{k_{\gamma}-z} \notag \\
\, & \times \sum_{\bfS_{q_{\gamma,1}-1,q_{\gamma,2}-1;k_{\gamma}-z}} \left( k_{\gamma} -
\mathrm{rank}
(\bfS_{q_{\gamma,1}-1,q_{\gamma,2}-1;k_{\gamma}-z}) \right) \, .
\end{align}
It is possible to show that \eqref{eq:a10u} and \eqref{eq:a01u} are equivalent to
$2\sum_{u=1}^{k_{\gamma}} B_{2,u}^{(\gamma)}(1,1) \epsilon^u = \sum_{u=1}^{k_{\gamma}}
\chi_{2,u}^{(\gamma)}(1,1) \epsilon^u$ and
$\sum_{u=1}^{k_{\gamma}} B_{2,u}^{(\gamma)}(1,2)\epsilon^u = \sum_{u=1}^{k_{\gamma}}
\chi_{2,u}^{(\gamma)}(1,2) \epsilon^u$, respectively. Both expressions are obtained
through an argument along the same line as that used, in the one-edge type case, to prove
Lemma~4 in \cite{paolini09:stability}. Incorporating these expressions into
\eqref{eq:lim_IEVl_IAVe}, recalling that $\mathsf{I}_{\mathsf{EV},1}(\mbox{\boldmath
$I$}_{AV},\epsilon)=\sum_{\gamma\in\scrF_V}\lambda_{\gamma,1}
\mathsf{I}^{(\gamma)}_{\mathsf{EV},1}(\mbox{\boldmath
$I$}_{AV},\epsilon)$, and recalling \eqref{eq:Plm}, we finally obtain
\begin{align}\label{eq:P11}
\frac{\partial\, \mathsf{I}_{\mathsf{EV},1}}{\partial I_{AV,1}}(\mbox{\boldmath $1$},\epsilon) & =
\sum_{\gamma\in\scrF_{V2}} \left( \frac{\lambda_{\gamma,1}}{q_{\gamma_1}} \right)
\sum_{u=1}^{k_{\gamma}} \chi_{2,u}^{(\gamma)}(1,1) \epsilon^u = P^{1,1}(\epsilon)
\end{align}
%
%and
%
\begin{align}\label{eq:P12}
\frac{\partial\, \mathsf{I}_{\mathsf{EV},1}}{\partial I_{AV,2}}(\mbox{\boldmath $1$},\epsilon) & =
\sum_{\gamma\in\scrF_{V2}} \left( \frac{\lambda_{\gamma,1}}{q_{\gamma_1}} \right)
\sum_{u=1}^{k_{\gamma}} \chi_{2,u}^{(\gamma)}(1,2) \epsilon^u P^{1,2}(\epsilon)\, .
\end{align}
Note that in both \eqref{eq:P11} and \eqref{eq:P12} the summation is over $\scrF_{V2}$ since, for
any $\gamma\in\scrF_{V}\setminus\scrF_{V2}$, we have
$\chi_{2,u}^{(\gamma)}(1,1)=\chi_{2,u}^{(\gamma)}(1,2)=0$. The same proof technique leading to
\eqref{eq:P11} and \eqref{eq:P12} yields $\frac{\partial\, \mathsf{I}_{EV,2}}{\partial
I_{AV,2}}(\mbox{\boldmath $1$},\epsilon)=P^{2,2}(\epsilon)$ and $\frac{\partial\,
\mathsf{I}_{EV,2}}{\partial I_{AV,1}}(\mbox{\boldmath $1$},\epsilon)=P^{2,1}(\epsilon)$.

We now need to develop $\frac{\partial\, \mathsf{I}_{\mathsf{EC},e}}{\partial
I_{AC,m}}(\mbox{\boldmath $1$})$ in the
right-hand side of \eqref{eq:Jlm_expanded}. To this purpose, simply observe that a CN of type
$\delta\in\scrF_{C}$ may be regarded as a VN whose $k_{\delta}$ local information bits are all
punctured ($\bfb_{\delta}=\bfzero$). Note that this is equivalent to assuming a channel erasure
probability $\epsilon=1$ for the VN. In this case, the right-hand sides of \eqref{eq:P11} and
\eqref{eq:P12} become $\sum_{\gamma\in\scrF_{V2}} \frac{\lambda_{\gamma,1}}{q_{\gamma_1}}
\sum_{u=1}^{k_{\gamma}}
\chi_{2,u}^{(\gamma)}(1,1) = \sum_{\gamma\in\scrF_{V2}} 
\frac{\lambda_{\gamma,1}}{q_{\gamma_1}} 
\chi_{2}^{(\gamma)}(1,1)$ and $\sum_{\gamma\in\scrF_{V2}} \!
\frac{\lambda_{\gamma,1}}{q_{\gamma_1}} \sum_{u=1}^{k_{\gamma}}
\chi_{2,u}^{(\gamma)}(1,2) = \sum_{\gamma\in\scrF_{V2}} 
\frac{\lambda_{\gamma,1}}{q_{\gamma_1}} 
\chi_{2}^{(\gamma)}(1,2)$, respectively. Thus, we have
\begin{align}\label{eq:C11}
\frac{\partial\, \mathsf{I}_{\mathsf{EC},1}}{\partial I_{AC,1}}(\mbox{\boldmath $1$}) & =
\sum_{\delta\in\scrF_{C2}} \left( \frac{\rho_{\delta,1}}{s_{\delta_1}}
\right) \xi_{2}^{(\delta)}(1,1) = C^{1,1} \, ,
\end{align}
%
%and
%
\begin{align}\label{eq:C12}
\frac{\partial\, \mathsf{I}_{\mathsf{EC},1}}{\partial I_{AC,2}}(\mbox{\boldmath $1$}) & =
\sum_{\delta\in\scrF_{C2}} \left( \frac{\lambda_{\delta,1}}{q_{\delta_1}} \right)
\xi_{2}^{(\delta)}(1,2) = C^{1,2} \, ,
\end{align}
and also $\frac{\partial\, \mathsf{I}_{\mathsf{EC},2}}{\partial I_{AC,2}}(\mbox{\boldmath
$1$})=C^{2,2}$ and $\frac{\partial\, \mathsf{I}_{\mathsf{EC},2}}{\partial I_{AC,1}}(\mbox{\boldmath
$1$})=C^{2,1}$. Hence, for $l,m\in\{1,2\}$, the $(l,m)$-th entry of $\mbox{\boldmath
$J$}(\mbox{\boldmath $1$},\epsilon)$ is
given by $J^{l,m}(\mbox{\boldmath $1$},\epsilon) = \sum_{e=1}^2 P^{l,e}(\epsilon) C^{e,m}$, i.e.,
$\mbox{\boldmath $J$}(\mbox{\boldmath $1$},\epsilon)=\mbox{\boldmath $P$}(\epsilon) \,
\mbox{\boldmath $C$}$. \hfill\QED

\section{Examples}
In this section, the stability of the iterative decoder is analyzed for two MET ensembles. 

\begin{example}\label{example:product}
Consider the two-edge-type ensemble ($\mathscr{E}=\{1,2\}$) whose Tanner graph is depicted in
Fig.~\ref{fig:ex1}, where edges of type $1\in\mathscr{E}$ are depicted in red and
edges of type $2\in\mathscr{E}$ in blue. There are $N$ VNs, all of
the same type $\gamma$. Each VN is a length-$2$ repetition code with $\mathbf{G}_{\gamma}=[1,1]$,
with one socket of type $1\in\mathscr{E}$ and the other of type $2\in\mathscr{E}$. Thus, we have
$\mathscr{F}_V=\mathscr{F}_{V2}=\{\gamma\}$. There are two CN types
$\mathscr{F}_C=\{\delta_1,\delta_2\}$, where CNs of type $\delta_1$ are $(s_1,h_1)$ codes, depicted
in yellow, and CNs of type $\delta_2$ are $(s_2,h_2)$ codes, depicted in green. All $s_1$ sockets
of a type-$\delta_1$ CN are of type $1\in\mathscr{E}$, while all $s_2$ sockets of a type-$\delta_2$
CN are of type $2\in\mathscr{E}$. The number of CNs of types $\delta_1$ and $\delta_2$ are $N/s_1$
and $N/s_2$ respectively, so each edge interleaver is for $N$ edges. 

Assuming that CNs of both types have minimum distance $2$
($\mathscr{F}_{C2}=\{\delta_1,\delta_2\}$), we obtain
$$
\mbox{\boldmath $P$}(\epsilon)=\left[ \begin{array}{cc} 0 & \epsilon\\ \epsilon & 0 \end{array}
\right] \quad
\mathrm{and} \quad \mbox{\boldmath $C$} = \left[ \begin{array}{cc} 2\,A_2^{(\delta_1)}/s_1 & 0\\ 0 &
2\,A_2^{(\delta_2)}/s_2 \end{array} \right] \, ,
$$
where $A_2^{(\delta_1)}$ and $A_2^{(\delta_2)}$ are the multiplicities of weight-$2$ local
codewords of CNs of types $\delta_1$ and $\delta_2$, respectively. From Theorem~\ref{theorem:J=PC},
the condition for local stability of the erasure-free state is
\begin{align}\label{stab:product}
\epsilon < \frac{1}{2} \sqrt{\frac{s_1 s_2}{A_2^{(\delta_1)} A_2^{(\delta_2)}}} \, ,
\end{align}
where the right-hand side is an upper bound on the iterative decoding threshold called the
\emph{stability bound}.

From \eqref{stab:product}, we see how the multiplicities $A_2^{(\delta_1)}$ and $A_2^{(\delta_2)}$
may jeopardize the decoder stability, and how increasing $s_1$ or $s_2$ is beneficial in terms of
stability. We may also observe that the erasure-free fixed point for this ensemble is a stable
attractor if the CNs of at least one type are characterized by minimum distance larger than $2$,
\emph{irrespective} of the local weight spectrum of CNs of the other type (all diagonal
entries as well as at least one off-diagonal entry of $C$ are zero in this case). In practice,
for large $N$ this model gives a good indication of stability for the ensemble of \emph{product
codes} which are obtained by taking $N=s_1 s_2$ and choosing appropriately the two edge
interleavers $\Pi_1$ and $\Pi_2$ \cite{lentmaier10:product}.
\end{example}

\begin{figure}[t]
\begin{center}
\includegraphics[width=0.63\columnwidth]{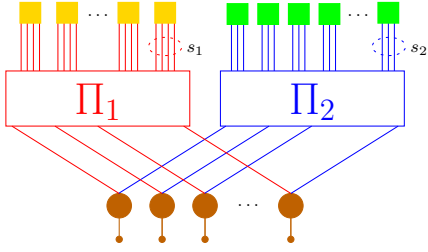}
\end{center}
\caption{MET ensemble analyzed in Example~\ref{example:product}.}
\label{fig:ex1}
\end{figure}

\begin{example}\label{example:gen_RA}
Consider the two-edge-type ensemble ($\mathscr{E}=\{1,2\}$) whose Tanner graph is depicted in
Fig.~\ref{fig:ex2}, where edges of type $1\in\mathscr{E}$ are depicted in red and edges of type
$2\in\mathscr{E}$ in blue. There are two VN types $\mathscr{F}_V=\{\gamma_1,\gamma_2\}$, where the
$N_{\gamma_1}$ VNs of type $\gamma_1$, depicted in cyan, are $(q,k)$ codes generated by some
generator matrix $\mathbf{G}_{\gamma_1}$, and the $N_{\gamma_2}=N_{\gamma_1}\, q$ VNs of type
$\gamma_2$, depicted in pink, are length-$2$ repetition codes with $\mathbf{G}_{\gamma_2}=[1,1]$.
All $q$ sockets of a type-$\gamma_1$ VN are of type $1\in\mathscr{E}$, while both sockets of a
type-$\gamma_2$ VN are of type $2\in\mathscr{E}$. Moreover, there are $N_{\gamma_1}\, q$ CNs, all
of the same type $\delta$. Each CN is a $(3,2)$ SPC code having one socket of type
$1\in\mathscr{E}$ and two sockets of type $2\in\mathscr{E}$. Hence, we have
$\mathscr{F}_C=\mathscr{F}_{C2}=\{\delta\}$.

Assuming that VNs of type $\gamma_1$ have minimum distance $2$
($\mathscr{F}_{V2}=\{\gamma_1,\gamma_2\}$), we obtain
$$
\mbox{\boldmath $P$}(\epsilon)=\left[ \begin{array}{cc} \frac{2}{q}\sum_{u=1}^k
B^{(\gamma_1)}_{2,u}\,
\epsilon^u & 0 \\ 0 & \epsilon \end{array}
\right] \quad
\mathrm{and} \quad \mbox{\boldmath $C$} = \left[ \begin{array}{cc} 0 & 2\\ 1 & 1 \end{array} \right]
\, ,
$$
where $B^{(\gamma_1)}_{2,u}$ denotes the number of weight-$2$ local codewords of VNs of type
$\gamma_1$
generated by local input words of length $k$ through $\mathbf{G}_{\gamma_1}$. Again applying
Theorem~\ref{theorem:J=PC}, we obtain the following condition for stability of the erasure-free
fixed point:
\begin{align}\label{eq:stab_gen_RA}
\frac{4}{q} \sum_{u=1}^k B^{(\gamma_1)}_{2,u}\, \epsilon^{u+1} < 1-\epsilon \, .
\end{align}

Again, an increase in the multiplicity of weight-$2$ local codewords of VNs of type $\gamma_1$ has
a negative effect on the stability of the fixed point $\mbox{\boldmath $I$}_{EV}=\bfone$, as it
reduces the range of channel erasure probabilities over which such a fixed point is locally stable
(just note that all coefficients of the polynomial on the left-hand side of
\eqref{eq:stab_gen_RA} are positive). Moreover, increasing $q$ has a positive effect on the
stability. We also point out that the fixed point $\mbox{\boldmath $I$}_{EV}=\bfone$ \emph{must} be
locally stable if the minimum distance of the type-$\gamma_1$ VNs is larger than $2$
(in fact, in this case we obtain $\epsilon < 1$). Finally, we observe that, upon a proper choice of
the edge interleaver $\Pi_2$, the Tanner graph depicted in Fig.~\ref{fig:ex2} corresponds to the
serial concatenation of a $(q,k)$ linear block encoder $\mathbf{G}_{\gamma_1}$ with an
accumulator. Hence, this class of codes may be seen as a generalization of repeat-accumulate (RA)
codes \cite{divsalar98:RA}. An RA code is obtained when type-$\gamma_1$ VNs are length-$q$
repetition codes.
\end{example}

\begin{figure}[t]
\begin{center}
\includegraphics[width=0.65\columnwidth]{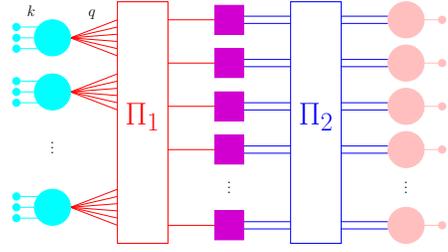}
\end{center}
\caption{MET ensemble analyzed in Example~\ref{example:gen_RA}.}
\label{fig:ex2}
\end{figure}

\vspace{4mm}
\section{Conclusion}
In this paper, the stability condition for iterative BP decoding of MET D-GLDPC codes over the BEC
has been developed. The obtained inequality is compact, and naturally extends to the MET ensemble
parametrization the previously obtained condition for unstructured irregular (single-edge type)
D-GLDPC codes. Although this point is not addressed in the present
work, we mention that for LDPC-like codes, the stability condition has
a further practical impact on code design through its relationship
with the average weight distribution of the ensemble. 
% Even if this point is not addressed in the
% present work, we mention that, for
% LDPC-like codes, the stability condition has an important practical impact on the code design, as
% it
% is related to the average weight distribution of the ensemble.

\bibliographystyle{IEEEtran}
\vspace{3.5mm}
\bibliography{IEEEabrv,bibfile}
\end{document}